\newcommand{\PKi}{K_{P_{t_i}}}
\newcommand{\PBi}{B_{P_{t_i}}}
\newcommand{\PKid}{K_{P_{t_i}^*}}
\newcommand{\PAid}{A_{P_{t_i}^*}}
\newcommand{\N}{\mathbb{N}}
\newcommand{\R}{\mathbb{R}}
\newcommand{\oa}{\alpha_i}
\begin{document}

\title{Fractional Calculus of Variations\\ of Several Independent Variables}

\author{Tatiana Odzijewicz\inst{1}\fnmsep\thanks{\email{tatianao@ua.pt}}
\and Agnieszka B. Malinowska\inst{2}\fnmsep\thanks{\email{a.malinowska@pb.edu.pl}}
\and Delfim F. M. Torres\inst{1}\fnmsep\thanks{\email{delfim@ua.pt}}}

\institute{CIDMA --- Center for Research and Development in Mathematics and Applications,\\
Department of Mathematics, University of Aveiro, 3810-193 Aveiro, Portugal
\and Faculty of Computer Science, Bialystok University of Technology,
15-351 Bia\l ystok, Poland}

% ---------------------------

\abstract{We prove multidimensional integration by parts formulas
for generalized fractional derivatives and integrals. The new results
allow us to obtain optimality conditions for multidimensional
fractional variational problems with Lagrangians depending
on generalized partial integrals and derivatives. A generalized
fractional Noether's theorem, a formulation of Dirichlet's principle
and an uniqueness result are given.}

\maketitle

% ---------------------------

\markboth{T. Odzijewicz, A. B. Malinowska, D. F. M. Torres}{Fractional Calculus of Variations of Several Independent Variables}

% ---------------------------

\section{Introduction}

The research field concerned with extremal values
of functionals is called \emph{the calculus of variations}
\cite{book:Jurgen,book:vanBrunt}. Often, variational functionals
are given in the form of an integral that involves an unknown function
and its derivatives. In the simplest case, one thinks of single variable integration.
However, results can be further extended to the multi-time calculus.
Variational problems are particularly attractive because of their many-fold applications.
For example, in physics, engineering, and economics, the variational integral
may represent some action, energy, or cost functional \cite{book:Ewing,book:Weinstock}.
The calculus of variations possesses also important connections with other fields of mathematics.
Here we are interested in connections with fractional calculus,
which is a generalization of the standard calculus
that considers integrals and derivatives of noninteger (real or complex) order
\cite{book:Kilbas,book:Podlubny,book:Samko}. The first question linking the two areas
was brought up in the XIXth century by Niels Heinrik Abel (1802--1829).
Abel's mechanical problem asks about a curve, lying in a vertical plane, for which the time taken
by a material point sliding without friction from the highest point to the lowest one,
is destined function of time \cite{Abel}. The problem
is a generalization of the tautochrone problem, which is part of the calculus of variations.
Nevertheless, only in 1996--1997, with the works of Riewe \cite{CD:Riewe:1996,CD:Riewe:1997},
the fractional calculus of variations became an important research field \emph{per se}
\cite{B:O:T,MyID:207,book:Klimek,MyID:249,Cresson,DerInt,gastao1,gastao4,OMT:2013}.
It is nowadays of strong interest, with many authors contributing to its theory and applications.
For the state of the art we refer the reader to the recent book \cite{book:fcv}.

Our goal is to develop a theory of the fractional calculus of variations
by considering multidimensional fractional variational problems
with Lagrangians depending on generalized partial fractional operators. Moreover,
applications to physics are discussed (see Example~\ref{ex:1},
Sections~\ref{sec:DP} and \ref{sec:NT}).
Our results generalize the fractional calculus
of variations for functionals involving multiple integrals
studied in \cite{MyID:182,tatiana,tarasov}, as well as previous works about
extremizers of single variable integral functionals
with generalized fractional operators \cite{OmPrakashAgrawal,FVC_Gen_Int,FVC_Gen}.

The text is organized as follows. In Section~\ref{sec:prelim}
we give definitions and basic properties for the generalized ordinary
and partial fractional operators. Main results are then proved and discussed
in Sections~\ref{sec:MR} and \ref{sec:MFCV}:
multidimensional fractional integration by parts formulas
are given in Section~\ref{sec:MR} (Theorems~\ref{theorem:IPRI} and \ref{theorem:IBPD});
while in Section~\ref{sec:MFCV} we obtain fractional partial differential equations
of the Euler--Lagrange type for multi-time variational problems with a Lagrangian depending
on generalized partial fractional operators (Theorems~\ref{thm:ELCaputo}
and \ref{thm:EL:classical:Caputo}), we prove
a generalized fractional Dirichlet's principle
(Theorems~\ref{thm:GFDP} and \ref{thm:unc:sol}),
and a fractional Noether's symmetry theorem (Theorem~\ref{noether}).
We end with Section~\ref{sec:conc} of conclusion.

% ------------------------

\section{Generalized Fractional Operators}
\label{sec:prelim}

In this section we give definitions of generalized ordinary and partial fractional operators.
By the choice of an appropriate kernel, these operators can be reduced to the standard
fractional integrals and derivatives. For more on the subject we refer the reader to
\cite{OmPrakashAgrawal,book:Kiryakova,FVC_Gen_Int,FVC_Gen}.

\begin{definition}[Generalized fractional integral]
\label{def:GI}
Let $f:[a,b]\rightarrow \mathbb{R}$. The operator $K_P^\alpha$ is defined by
\begin{equation*}
K_P^{\alpha}f(t)
:=p\int\limits_{a}^{t}k_{\alpha}(t,\tau)f(\tau)d\tau
+q\int\limits_{t}^{b}k_{\alpha}(\tau,t)f(\tau)d\tau,
\end{equation*}
where $P=\langle a,t,b,p,q\rangle$ is the \emph{parameter set} ($p$-set for brevity),
$t\in(a,b)$, $p,q$ are real numbers, and $k_{\alpha}(t,\tau)$
is a kernel which may depend on $\alpha$.
The operator $K_P^\alpha$ is referred as the \emph{operator $K$}
($K$-op for simplicity) of order $\alpha$ and $p$-set $P$.
\end{definition}

\begin{theorem}[cf. Theorem~2.3 of \cite{FVC_Gen}]
\label{theorem:L1}
Let $k_\alpha$ be a difference kernel, i.e., $k_\alpha(t,\tau)=k_\alpha(t-\tau)$
and $k_\alpha\in L_1\left((0,b-a);\R\right)$. Then
$K_P^{\alpha}:L_1\left((a,b);\R\right)\rightarrow L_1\left((a,b);\R\right)$
is well defined, bounded and linear operator.
\end{theorem}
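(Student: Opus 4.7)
The plan is to reduce the claim to the classical $L_1$-convolution estimate (Young's inequality), applied separately to the two pieces of $K_P^\alpha f$. Both pieces are convolution-type integrals once one identifies the right change of variables, and linearity will come for free from the linearity of the Lebesgue integral.

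First I would note that, by extending $k_\alpha$ by zero outside $(0,b-a)$ and $f$ by zero outside $(a,b)$, the first summand $p\int_a^t k_\alpha(t-\tau)f(\tau)\,d\tau$ is exactly the convolution $p(k_\alpha*f)(t)$ restricted to $(a,b)$, while the second summand, after the change of variables $s=\tau-t$ (or, equivalently, Tonelli's theorem), behaves analogously. So the natural route is to estimate $\|K_P^\alpha f\|_{L_1}$ directly by Tonelli.

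Concretely, I would bound
\begin{equation*}
\int_a^b |K_P^\alpha f(t)|\,dt
\le |p|\int_a^b\!\!\int_a^t |k_\alpha(t-\tau)||f(\tau)|\,d\tau\,dt
+|q|\int_a^b\!\!\int_t^b |k_\alpha(\tau-t)||f(\tau)|\,d\tau\,dt,
\end{equation*}
and then apply Tonelli's theorem to swap the order of integration in each term. In the first term the inner integral becomes $\int_\tau^b |k_\alpha(t-\tau)|\,dt=\int_0^{b-\tau}|k_\alpha(s)|\,ds\le \|k_\alpha\|_{L_1((0,b-a))}$, and analogously in the second term $\int_a^\tau |k_\alpha(\tau-t)|\,dt\le \|k_\alpha\|_{L_1((0,b-a))}$. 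Putting the pieces together yields
\begin{equation*}
\|K_P^\alpha f\|_{L_1((a,b))}\le (|p|+|q|)\,\|k_\alpha\|_{L_1((0,b-a))}\,\|f\|_{L_1((a,b))},
\end{equation*}
which simultaneously establishes well-definedness (the integrand defining $K_P^\alpha f(t)$ is finite for a.e.\ $t$), boundedness, and the fact that $K_P^\alpha f\in L_1((a,b);\R)$. Linearity then follows trivially from the linearity of the Lebesgue integral in $f$.

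There is no real obstacle here; the only point that demands any care is checking the Tonelli hypotheses (nonnegativity and measurability of $(t,\tau)\mapsto k_\alpha(t-\tau)f(\tau)$ on the triangular domains), which is immediate since $k_\alpha$ and $f$ are measurable and the difference-kernel structure transfers measurability to the two-variable function. The upper bound one obtains is in fact the standard Young-type constant $(|p|+|q|)\|k_\alpha\|_{L_1}$, so the argument is a direct multiparameter extension of the reasoning in Theorem~2.3 of \cite{FVC_Gen}.
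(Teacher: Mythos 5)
Your proof is correct: the Tonelli/Young-type estimate $\|K_P^{\alpha}f\|_{L_1}\le(|p|+|q|)\,\|k_{\alpha}\|_{L_1((0,b-a))}\,\|f\|_{L_1}$ gives well-definedness, boundedness and membership in $L_1$ all at once, and linearity is immediate. The paper itself only cites this result from Theorem~2.3 of \cite{FVC_Gen} without reproducing a proof, but your argument is exactly the standard one and coincides with the technique the authors use in the proof of Theorem~\ref{theorem:IPRI} (swapping the order of integration over the two triangular regions and bounding each inner integral by $\|k_{\alpha}\|_{L_1}$).
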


\begin{definition}[Generalized Riemann--Liouville fractional derivative]
\label{def:GRL}
Let $P$ be a given parameter set. The operator $A_P^\alpha$,
$0 < \alpha < 1$, is defined by
\begin{equation*}
A_P^\alpha := D \circ K_P^{1-\alpha},
\end{equation*}
where $D$ denotes the standard derivative.
We refer to $A_P^\alpha$ as \emph{operator $A$} ($A$-op)
of order $\alpha$ and $p$-set $P$.
\end{definition}

\begin{definition}[Generalized Caputo fractional derivative]
\label{def:GC}
Let $P$ be a given parameter set. The operator $B_P^\alpha$,
$\alpha \in (0,1)$, is defined by
\begin{equation*}
B_P^\alpha :=K_P^{1-\alpha} \circ D
\end{equation*}
and is referred as the \emph{operator $B$} ($B$-op)
of order $\alpha$ and $p$-set $P$.
\end{definition}

Operators $A$, $B$ and $K$ reduce to the classical
fractional integrals and derivatives for suitably
chosen kernels and $p$-sets (see \cite{FVC_Gen_Int,FVC_Gen}).
The notation was introduced in \cite{OmPrakashAgrawal}
and is now standard \cite{Lupa,OMT:2013}.

Now, we shall define generalized partial fractional operators.
For $n\in\mathbb{N}$, let $\bm{\alpha}=(\alpha_1,\dots,\alpha_n)$,
$p=(p_1,\dots,p_n)$, $q=(q_1,\dots,q_n)\in\mathbb{R}^n$
with $0<\alpha_i<1$, $i=1,\dots,n$, and
$\Delta_n=(a_1,b_1)\times\dots\times (a_n,b_n)\subset \mathbb{R}^n $,
$t=(t_1,\dots, t_n)\in \Delta_n$. Generalized partial fractional integrals
and derivatives are natural generalizations of the corresponding one-dimensional
generalized fractional integrals and derivatives,
being taken with respect to one or several variables.

\begin{definition}[Generalized partial fractional integral]
\label{def:GPI}
Let  $f=f(t_1,\dots,t_n):\bar{\Delta}_n\rightarrow \mathbb{R}$.
The generalized partial Riemann--Liouville fractional integral
of order $\alpha_i$ with respect to the $i$th variable $t_i$ is given by
\begin{multline*}
K_{P_{t_i}}^{\alpha_i}f(t)
:=p_i\int\limits_{a_i}^{t_i}k_{\alpha_i}(t_i,\tau)
f(t_1,\dots,t_{i-1},\tau,t_{i+1},\dots,t_n)d\tau \\
+q_i\int\limits_{t_i}^{b_i}k_{\alpha_i}(\tau,t_i)
f(t_1,\dots,t_{i-1},\tau,t_{i+1},\dots,t_n)d\tau,
\end{multline*}
where $P_{t_i}=\langle a_i,t_i,b_i,p_i,q_i \rangle $. We refer to
$K_{P_{t_i}}^{\alpha}$ as the \emph{partial operator $K$} (partial $K$-op)
of order $\alpha_i$ and $p$-set $P_{t_i}$.
\end{definition}

\begin{definition}[Generalized partial Riemann--Liouville fractional derivative]
\label{def:GPRL}
Let $P_{t_i}=\langle a_i,t_i,b_i,p_i,q_i \rangle $. The generalized partial
Riemann--Liouville fractional derivative of order $\alpha$
with respect to the $i$th variable $t_i$ is given by
\begin{multline*}
A_{P_{t_i}}^{\alpha_i}f(t):=\frac{\partial}{\partial t_i}\left(
p_i\int\limits_{a_i}^{t_i}k_{1-\alpha_i}(t_i,\tau)
f(t_1,\dots,t_{i-1},\tau,t_{i+1},\dots,t_n)d\tau \right. \\
\left.+q_i\int\limits_{t_i}^{b_i}k_{1-\alpha_i}(\tau,t_i)
f(t_1,\dots,t_{i-1},\tau,t_{i+1},\dots,t_n)d\tau\right)
=\left(\frac{\partial}{\partial t_i}K_{P_{t_i}}^{1-\alpha_i}f\right)(t) .
\end{multline*}
The operator $A_{P_{t_i}}^{\alpha_i}$ is referred as the \emph{partial operator $A$}
(partial $A$-op) of order $\alpha_i$ and $p$-set $P_{t_i}$.
\end{definition}

\begin{definition}[Generalized partial Caputo fractional derivative]
\label{def:GPC}
Let  $P_{t_i}=\langle a_i,t_i,b_i,p_i,q_i \rangle $. The generalized partial Caputo
fractional derivative of order $\alpha_i$ with respect to the $i$th variable $t_i$ is given by
\begin{multline*}
B_{P_{t_i}}^{\alpha_i}f(t)
:=p_i\int\limits_{a_i}^{t_i}k_{1-\alpha_i}(t_i,\tau)\frac{\partial}{\partial \tau}
f(t_1,\dots,t_{i-1},\tau,t_{i+1},\dots,t_n)d\tau \\
+q_i\int\limits_{t_i}^{b_i}k_{1-\alpha_i}(\tau,t_i)\frac{\partial}{\partial \tau}
f(t_1,\dots,t_{i-1},\tau,t_{i+1},\dots,t_n)d\tau
=\left(K_{P_{t_i}}^{1-\alpha_i} \frac{\partial}{\partial t_i} f\right)(t)
\end{multline*}
and is referred as the \emph{partial operator $B$} (partial $B$-op)
of order $\alpha_i$ and $p$-set $P_{t_i}$.
\end{definition}

Similarly to the one-dimension case, partial operators $K$, $A$ and $B$ reduce
to the standard partial fractional integrals and derivatives. The left- and right-sided
Riemann--Liouville partial fractional integrals with respect to the $i$th variable
$t_i$ are obtained by choosing the kernel $k_{\alpha_i}(t_i,\tau)
=\frac{1}{\Gamma(\alpha_i)}(t_i-\tau)^{\alpha_i-1}$ and $p$-sets
$P_{t_i}=\langle a_i,t_i,b_i,1,0\rangle$ and
$P_{t_i}=\langle a_i,t_i,b_i,0,1\rangle$, respectively:
\begin{equation*}
K_{P_{t_i}}^{\alpha_i}f(t)=\frac{1}{\Gamma(\alpha_i)}
\int\limits_{a_i}^{t_i}(t_i-\tau)^{\alpha_i-1}f(t_1,\dots,t_{i-1},\tau,t_{i+1},
\dots,t_n)d\tau=: {_{a_i}}\textsl{I}^{\alpha_i}_{t_i} f(t),
\end{equation*}
\begin{equation*}
K_{P_{t_i}}^{\alpha}f(t)=\frac{1}{\Gamma(\alpha_i)}\int\limits_{t_i}^{b_i}(\tau-t_i)^{\alpha_i-1}
f(t_1,\dots,t_{i-1},\tau,t_{i+1},\dots,t_n)d\tau=: {_{t_i}}\textsl{I}^{\alpha_i}_{b_i} f(t).
\end{equation*}
The standard left- and right-sided Riemann--Liouville and Caputo partial fractional derivatives
with respect to the $i$th variable $t_i$ are received by choosing the kernel
$k_{1-\alpha_i}(t_i,\tau)=\frac{1}{\Gamma(1-\alpha_i)}(t_i-\tau)^{-\alpha_i}$:
if $P_{t_i}=\langle a_i,t_i,b_i,1,0\rangle$, then
\begin{equation*}
A_{P_{t_i}}^{\alpha_i}f(t)=\frac{1}{\Gamma(1-\alpha_i)}\frac{\partial}{\partial t_i}
\int\limits_{a_i}^{t_i}(t_i-\tau)^{-\alpha_i}f(t_1,\dots,t_{i-1},\tau,t_{i+1},\dots,t_n)d\tau
=:{_{a_i}}\textsl{D}^{\alpha_i}_{t_i} f(t),
\end{equation*}
\begin{equation*}
B_{P_{t_i}}^{\alpha_i}f(t)=\frac{1}{\Gamma(1-\alpha_i)}\int\limits_{a_i}^{t_i}
(t_i-\tau)^{-\alpha_i}\frac{\partial}{\partial \tau}
f(t_1,\dots,t_{i-1},\tau,t_{i+1},\dots,t_n)d\tau\\
=:{^{C}_{a_i}}\textsl{D}^{\alpha_i}_{t_i} f(t);
\end{equation*}
if $P_{t_i}=\langle a_i,t_i,b_i,0,1\rangle$, then
\begin{equation*}
-A_{P_{t_i}}^{\alpha_i}f(t)=\frac{1}{\Gamma(1-\alpha_i)}\frac{\partial}{\partial t_i}
\int\limits_{t_i}^{b_i}(\tau-t_i)^{-\alpha_i}f(t_1,\dots,t_{i-1},\tau,t_{i+1},\dots,t_n)d\tau\\
=:{_{t_i}}\textsl{D}^{\alpha_i}_{b_i} f(t),
\end{equation*}
\begin{equation*}
-B_{P_{t_i}}^{\alpha}f(t)=\frac{1}{\Gamma(1-\alpha_i)}
\int\limits_{t_i}^{b_i}(\tau-t_i)^{-\alpha_i}\frac{\partial}{\partial \tau}
f(t_1,\dots,t_{i-1},\tau,t_{i+1},\dots,t_n)d\tau \\
=:{^{C}_{t_i}}\textsl{D}^{\alpha_i}_{b_i} f(t).
\end{equation*}

\begin{remark}
In Definitions~\ref{def:GPI}, \ref{def:GPRL} and \ref{def:GPC}
all the variables except $t_i$ are kept fixed. That choice
of fixed values determines a function
$f_{t_1,\dots,t_{i-1},t_{i+1},\dots,t_n}:(a_i,b_i)\rightarrow \mathbb{R}$
of one variable $t_i$:
$f_{t_1,\dots,t_{i-1},\dots,t_{i+1},\dots,t_n}(t_i)
=f(t_1,\dots,t_{i-1},t_i,t_{i+1},\dots,t_n)$.
By Definitions~\ref{def:GI}, \ref{def:GRL}, \ref{def:GC}
and \ref{def:GPI}, \ref{def:GPRL}, \ref{def:GPC} we have
$$
K_{P_{t_i}}^{\alpha_i}f_{t_1,\dots,t_{i-1},t_{i+1},\dots,t_n}(t_i)
=K_{P_{t_i}}^{\alpha_i}f(t_1,\dots,t_{i-1},t_i,t_{i+1},\dots,t_n),
$$
$$
A_{P_{t_i}}^{\alpha_i}f_{t_1,\dots,t_{i-1},t_{i+1},\dots,t_n}(t_i)
=A_{P_{t_i}}^{\alpha_i}f(t_1,\dots,t_{i-1},t_i,t_{i+1},\dots,t_n),
$$
$$
B_{P_{t_i}}^{\alpha_i}f_{t_1,\dots,t_{i-1},t_{i+1},\dots,t_n}(t_i)
=B_{P_{t_i}}^{\alpha_i}f(t_1,\dots,t_{i-1},t_i,t_{i+1},\dots,t_n).
$$
Therefore, as in the integer-order case, computation of partial
generalized fractional operators reduces to the computation
of one-variable generalized fractional operators.
\end{remark}

% ------------------------

\section{Generalized Fractional Integration by Parts
for Functions of Several Variables}
\label{sec:MR}

The integration by parts formula plays a crucial role
in the principle of virtual works. In this section,
it is of our interest to obtain such formula for generalized
fractional operators. Throughout the section,
$i\in\{1,\ldots,n\}$ is arbitrary but fixed.

\begin{definition}[Dual $p$-set]
For a given $p$-set $P_{t_i}=\langle a_i,t_i,b_i,p_i,q_i\rangle$
we denote by $P_{t_i}^{*}$ the $p$-set
$P_{t_i}^{*} = \langle  a_i,t_i,b_i,q_i,p_i\rangle$.
We say that $P_{t_i}^{*}$ is the dual of $P_{t_i}$.
\end{definition}

\begin{theorem}
\label{theorem:IPRI}
Let $\alpha_i\in (0,1)$ and $P_{t_i}=\langle a_i,t_i,b_i,p_i,q_i\rangle$
be a parameter set. Moreover, let $k_{\alpha_i}$ be a difference kernel,
i.e., $k_{\alpha_i}(t_i,\tau)=k_{\alpha_i}(t_i-\tau)$ such that
$k_{\alpha_i}\in L_1\left((0,b_i-a_i);\R\right)$. If $f:\R^n\rightarrow\R$
and $\eta:\R^n\rightarrow\R$, $f,\eta\in C\left(\bar{\Delta}_n;\R\right)$,
then the generalized partial fractional integrals satisfy the identity
\begin{equation*}
\int\limits_{\Delta_n}f\cdot \PKi^{\alpha_i}\eta~dt_n\dots dt_1
=\int\limits_{\Delta_n} \eta\cdot \PKid^{\alpha_i} f~dt_n\dots dt_1,
\end{equation*}
where $P_{t_i}^*$ is the dual of $P_{t_i}$.
\end{theorem}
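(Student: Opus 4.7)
The plan is to reduce the multidimensional identity to the one-variable case via Fubini's theorem and then establish the one-variable identity by swapping the order of integration in the two terms defining $K_P^{\alpha}$.

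First, I would fix the index $i$ and apply Fubini to rewrite both sides as iterated integrals with the $t_i$-integration innermost:
\begin{equation*}
\int_{\Delta_n} f\cdot K_{P_{t_i}}^{\alpha_i}\eta \, dt_n\cdots dt_1
=\int_{\Delta_{n-1}^{(i)}}\!\left(\int_{a_i}^{b_i} f\cdot K_{P_{t_i}}^{\alpha_i}\eta\, dt_i\right)dt_n\cdots \widehat{dt_i}\cdots dt_1,
\end{equation*}
where $\Delta_{n-1}^{(i)}$ is the product of the remaining intervals. This is legitimate because $f$ and $\eta$ are continuous on the compact set $\bar{\Delta}_n$, hence bounded; by the remark preceding the theorem, $K_{P_{t_i}}^{\alpha_i}\eta$ equals the one-dimensional $K$-op applied to the $t_i$-slice $\eta_{t_1,\dots,\widehat{t_i},\dots,t_n}$, so Theorem~\ref{theorem:L1} guarantees it lies in $L_1(a_i,b_i)$. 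The product $f\cdot K_{P_{t_i}}^{\alpha_i}\eta$ is therefore integrable on $\Delta_n$ and Fubini applies.

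Next, with the other variables treated as parameters, I would prove the one-variable identity
\begin{equation*}
\int_{a_i}^{b_i} f\cdot K_{P_{t_i}}^{\alpha_i}\eta \, dt_i
=\int_{a_i}^{b_i} \eta\cdot K_{P_{t_i}^{*}}^{\alpha_i} f \, dt_i.
\end{equation*}
Expanding the left-hand side splits it into a $p_i$-piece with integration region $\{a_i\le\tau\le t_i\le b_i\}$ and a $q_i$-piece with region $\{a_i\le t_i\le\tau\le b_i\}$. Swapping the order of integration in each double integral, and then relabelling the dummy variables $t_i\leftrightarrow\tau$, converts the $p_i$-piece into a right-type integral and the $q_i$-piece into a left-type integral. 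Combining them reassembles $K_{P_{t_i}^{*}}^{\alpha_i} f$, since $P_{t_i}^{*}=\langle a_i,t_i,b_i,q_i,p_i\rangle$ has the roles of $p_i$ and $q_i$ swapped.

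The swaps of order of integration are justified by Tonelli applied to the nonnegative integrand $|k_{\alpha_i}(t_i-\tau)|\,|f|\,|\eta|$: continuity of $f,\eta$ on $\bar{\Delta}_n$ yields uniform bounds, and $k_{\alpha_i}\in L_1(0,b_i-a_i)$ makes the double integrals finite, so Fubini gives the required interchange. I expect the only real subtlety to be precisely this integrability check, together with verifying that the one-variable computation can be carried out uniformly in the frozen parameters $(t_1,\dots,\widehat{t_i},\dots,t_n)$ so that the final Fubini step putting the $n-1$ outer integrals back together is legal; once the uniform $L_1$-bound coming from Theorem~\ref{theorem:L1} and boundedness of $f,\eta$ is in place, the identity follows.
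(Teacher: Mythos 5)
Your proposal is correct and follows essentially the same route as the paper: both hinge on a Fubini--Tonelli interchange of the $t_i$ and $\tau$ integrations over the triangular regions, justified by boundedness of $f,\eta$ (continuity on the compact $\bar{\Delta}_n$) together with $k_{\alpha_i}\in L_1\left((0,b_i-a_i);\R\right)$, after which the swap of the roles of $p_i$ and $q_i$ produces the dual $p$-set. The only cosmetic difference is that you first freeze the remaining variables and argue slice-by-slice before reassembling, whereas the paper performs the interchange directly inside the full $n$-dimensional integral; the key estimate and the key computation are identical.
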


\begin{proof}
Let $\alpha_i\in (0,1)$, $P_{t_i}=\langle a_i,t_i,b_i,p_i,q_i \rangle$
and $f,\eta,\in C\left(\bar{\Delta}_n;\R\right)$. Define
\[
F(t,\tau):=
\left\{
\begin{array}{ll}
\left|p_i k_{\alpha_i}(t_i-\tau)\right|
\cdot\left|f(t)\right|\cdot \left|\eta(t_1,\dots,t_{i-1},\tau,t_{i+1},\dots,t_n)\right|
& \mbox{if $\tau < t_i$}\\
\left|q_i k_{\alpha_i}(\tau-t_i)\right|
\cdot \left|f(t)\right|\cdot\left|\eta(t_1,\dots,t_{i-1},\tau,t_{i+1},\dots,t_n)\right|
& \mbox{if $\tau > t_i$}
\end{array}\right.
\]
for all $(t,\tau)\in \Delta_n\times (a_i,b_i)$.
Since $f$ and $\eta$ are continuous functions on $\bar{\Delta}_n$,
they are bounded on $\bar{\Delta}_n$, \textrm{i.e.}, there exist real numbers
$C, D>0$ such that $\left|f(t)\right|\leq C$ and $\left|\eta(t)\right| \leq D$
for all $t\in \Delta_n$. Therefore,
\begin{equation*}
\begin{split}
\int\limits_{\Delta_n} &\left(\int_{a_i}^{b_i} F(t,\tau) d\tau \right)dt_n\dots dt_1\\
&=\int\limits_{\Delta_n}\left(\int_{a_i}^{t_i} \left|p_i k_{\alpha_i}(t_i-\tau)\right|
\cdot\left|f(t)\right|\cdot \left|\eta(t_1,
\dots,t_{i-1},\tau,t_{i+1},\dots,t_n)\right| d\tau\right. \\
& \quad \left.+\int_{t_i}^{b_i}\left|q_ik_{\alpha_i}(\tau-t_i)\right|
\cdot \left|f(t)\right|\cdot\left|\eta(t_1,\dots,t_{i-1},\tau,t_{i+1},
\dots,t_n)\right| d\tau \right)dt_n\dots dt_1 \\
&\leq C\cdot D\int\limits_{\Delta_n}\left(\int_{a_i}^{t_i}
\left| p_ik_{\alpha_i}(t_i-\tau)\right| d\tau
+ \int_{t_i}^{b_i}\left|q_ik_{\alpha_i}(\tau-t_i)\right|d\tau \right)dt_n\dots dt_1 \\
&\leq C\cdot D \left(\left|q_i\right|+\left|p_i\right|\right)\left\|k_{\alpha_i}\right\|
\cdot\prod_{i=1}^n(b_i-a_i)\\
&<\infty.
\end{split}
\end{equation*}
The result follows by using Fubini's theorem to change
the order of integration in the iterated integrals:
\begin{equation*}
\begin{split}
\int_{\Delta_n} & f\cdot \PKi^{\oa}\eta dt_n\dots dt_1\\
&=\int\limits_{\Delta_n}\left( p_i\int_{a_i}^{t_i} f(t)k_{\alpha_i}(t_i-\tau)
\eta(t_1,\dots,t_{i-1},\tau,t_{i+1},\dots,t_n)d\tau \right.\\
&\  \left.+q_i \int_{t_i}^{b_i}f(t)k_{\alpha_i}(\tau-t_i)\eta(t_1,
\dots,t_{i-1},\tau,t_{i+1},\dots,t_n)d\tau \right)dt_n\dots dt_1\\
&=\int\limits_{\Delta_n}\left(p_i\int_{\tau}^{b_i} f(t)k_{\alpha_i}(t_i-\tau)
\eta(t_1,\dots,t_{i-1},\tau,t_{i+1},\dots,t_n) dt_i \right. + q_i \\
&\left. \ \times \int_{a_i}^{\tau}f(t)k_{\alpha_i}(\tau-t_i)\eta(t_1,
\dots,t_{i-1},\tau,t_{i+1},\dots,t_n)dt_i \right)dt_n\dots dt_{i-1}d\tau dt_{i+1}\dots dt_1\\
&=\int\limits_{\Delta_n}\eta(t_1,\dots,t_{i-1},\tau,t_{i+1},\dots,t_n)\left(
p_i\int_{\tau}^{b_i} f(t)k_{\alpha_i}(t_i-\tau) dt_i \right. \\
& \ \left.+q_i\int_{a_i}^{\tau}f(t)k_{\alpha_i}(\tau-t_i)dt_i \right)dt_n
\dots dt_{i-1}d\tau dt_{i+1}\dots dt_1\\
&=\int\limits_{\Delta_n}\eta\cdot K_{P_{t_i}^*}^{\alpha_i} f dt_n\dots dt_1.
\end{split}
\end{equation*}
\end{proof}

\begin{theorem}[Generalized fractional integration by parts]
\label{theorem:IBPD}
Let $\oa\in (0,1)$ and $P_{t_i}=\langle a_i,t_i,b_i,p_i,q_i\rangle$ be a parameter
set and $f,\eta\in C^1\left(\bar{\Delta}_n;\R\right)$. Moreover, let $k_{\alpha_i}$
be a difference kernel such that $k_{1-\alpha_i}\in L_1\left((0,b_i-a_i);\R\right)$
and $K_{P_{t_i}^*}^{1-\alpha_i} f\in C^1\left(\bar{\Delta}_n;\R\right)$.
Then
\begin{equation*}
\int\limits_{\Delta_n} f\cdot \PBi^{\oa}\eta~dt_n\dots dt_1
=\int\limits_{\partial\Delta_n} \eta\cdot\PKid^{1-\oa} f\cdot\nu^i~d(\partial\Delta_n)
-\int\limits_{\Delta_n}\eta\cdot \PAid^{\oa} f~dt_n\dots dt_1,
\end{equation*}
where $\nu^i$ is the outward pointing unit normal to $\partial\Delta_n$.
\end{theorem}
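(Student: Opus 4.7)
The plan is to reduce the multidimensional Caputo integration by parts formula to a single application of Theorem~\ref{theorem:IPRI} followed by the classical one-variable integration by parts in the variable $t_i$, with the remaining task being to repackage the resulting boundary term as an integral over $\partial\Delta_n$ weighted by $\nu^i$.

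First I would unfold $\PBi^{\oa}$ by Definition~\ref{def:GPC}: $\PBi^{\oa}\eta = \PKi^{1-\oa}(\partial\eta/\partial t_i)$. Since $\eta\in C^1(\bar\Delta_n;\R)$, the function $\partial\eta/\partial t_i$ is continuous on $\bar\Delta_n$, and the kernel $k_{1-\oa}$ lies in $L_1((0,b_i-a_i);\R)$ by hypothesis. Hence Theorem~\ref{theorem:IPRI}, applied with $\alpha_i$ replaced by $1-\oa$ and $\eta$ replaced by $\partial\eta/\partial t_i$, yields
$$\int\limits_{\Delta_n} f\cdot\PBi^{\oa}\eta\,dt_n\cdots dt_1 = \int\limits_{\Delta_n}\frac{\partial\eta}{\partial t_i}\cdot\PKid^{1-\oa}f\,dt_n\cdots dt_1.$$

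Next, using Fubini to bring the $t_i$-integration innermost, I would invoke standard integration by parts in the single variable $t_i$. The regularity assumptions $\eta\in C^1(\bar\Delta_n;\R)$ and $\PKid^{1-\oa}f\in C^1(\bar\Delta_n;\R)$ make this step legitimate. The interior contribution transfers the derivative onto $\PKid^{1-\oa}f$, producing $-\eta\cdot(\partial/\partial t_i)\PKid^{1-\oa}f$, which by Definition~\ref{def:GPRL} equals $-\eta\cdot\PAid^{\oa}f$. The remaining boundary contribution is
$$\int\limits_{\prod_{j\ne i}(a_j,b_j)}\Bigl[\eta\cdot\PKid^{1-\oa}f\Bigr]_{t_i=a_i}^{t_i=b_i}\prod_{j\ne i}dt_j.$$

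Finally I would rewrite this boundary contribution as a surface integral on $\partial\Delta_n$. Because $\Delta_n$ is a rectangular box, the outward unit normal $\nu$ has $i$th component $\nu^i=+1$ on the face $\{t_i=b_i\}$, $\nu^i=-1$ on the face $\{t_i=a_i\}$, and $\nu^i=0$ on the remaining $2(n-1)$ faces. Thus the bracketed expression above coincides with $\int_{\partial\Delta_n}\eta\cdot\PKid^{1-\oa}f\cdot\nu^i\,d(\partial\Delta_n)$, and combining the three steps yields the claimed identity.

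The main obstacle is not analytic but structural: the only nontrivial analytic ingredient is Theorem~\ref{theorem:IPRI}, and once it is in place the argument is essentially an ordinary integration by parts performed slice by slice. The subtle point is the last step, where one must verify that the classical one-dimensional boundary data at $t_i=a_i$ and $t_i=b_i$ assembles correctly into a genuine $(n-1)$-dimensional surface integral weighted by the $i$th component of the outward normal, since this is what turns a one-parameter identity into its multidimensional form.
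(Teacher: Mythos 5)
Your proposal is correct and follows essentially the same route as the paper: unfold $\PBi^{\oa}\eta=\PKi^{1-\oa}(\partial\eta/\partial t_i)$, apply Theorem~\ref{theorem:IPRI} at order $1-\oa$ to transfer the $K$-operator onto $f$, then perform classical integration by parts in $t_i$ and identify $\frac{\partial}{\partial t_i}\PKid^{1-\oa}f$ with $\PAid^{\oa}f$. Your explicit verification that the one-dimensional boundary evaluations at $t_i=a_i$ and $t_i=b_i$ assemble into the surface integral weighted by $\nu^i$ is a detail the paper leaves to a citation of Evans, but it is the same argument.
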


\begin{proof}
By definition of the generalized partial Caputo fractional derivative,
Theorem~\ref{theorem:IPRI}, and the standard integration by parts formula
(see, e.g.,\cite{book:Evans}), one has
\begin{equation*}
\begin{split}
\int\limits_{\Delta_n} f &\cdot \PBi^{\oa}\eta~dt_n\dots dt_1\\
&=\int\limits_{\Delta_n} f\cdot K_{P_{t_i}}^{1-\alpha_i}
\frac{\partial\eta}{\partial t_i}~dt_n\dots dt_1
=\int\limits_{\Delta_n}\frac{\partial\eta}{\partial t_i}
K_{P_{t_i}^*}^{1-\alpha_i}f~dt_n\dots dt_1\\
&=\int\limits_{\partial\Delta_n}\eta\cdot\PKid^{1-\oa}
f\cdot\nu^i~d(\partial\Delta_n)-\int\limits_{\Delta_n} \eta
\cdot\frac{\partial}{\partial t_i} K_{P_{t_i}^*}^{1-\alpha_i}f~dt_n\dots dt_1\\
&=\int\limits_{\partial\Delta_n}\eta\cdot\PKid^{1-\oa}f\cdot\nu^i~d(\partial\Delta_n)
-\int\limits_{\Delta_n} \eta\cdot\PAid^{\alpha_i}f~dt_n\dots dt_1.
\end{split}
\end{equation*}
\end{proof}

% ----------------------------------------------------

\section{The Generalized Fractional Calculus of Variations of Several Independent Variables}
\label{sec:MFCV}

Variational problems with functionals depending on several independent variables arise,
for example, in mechanics, for systems with infinite number of degrees of freedom,
like a vibrating elastic solid. Fractional variational problems involving multiple
integrals have been already studied in different contexts. We can mention here
\cite{MyID:182,Baleanu,Cresson,tatiana}, where the multidimensional
fractional Euler--Lagrange equations for the field are obtained,
or \cite{malinowska:1,malinowska:2}, where a first and a second fractional
Noether-type theorem are proved. In this section we present a more general approach
to the subject by considering functionals depending on generalized fractional operators.
In the sequel we use the notion of generalized fractional gradient.

\begin{definition}[The generalized fractional gradient operator]
Let $n\in\N$, $P=\left(P_{t_1},\dots, P_{t_n}\right)$, and $\bm{\alpha}\in (0,1)^n$.
We define the generalized fractional gradient of a function
$f:\mathbb{R}^n\rightarrow\mathbb{R}$ with respect
to a generalized fractional operator $T$ by
\begin{equation*}
\nabla_{T_{P}}^{\bm{\alpha}}f:=\sum_{i=1}^n e_i T_{P_{t_i}}^{\alpha_i}f,
\end{equation*}
where $\left\{e_i:i=1,\dots n\right\}$ denotes the standard basis in $\mathbb{R}^n$.
Additionally, we define $\nabla_{T_{P}}^{\bm{\alpha}}f$ for a vector function
$f:\mathbb{R}^n\rightarrow\mathbb{R}^N$ by
$$
\nabla_{T_{P}}^{\bm{\alpha}} f
:= \left[\nabla_{T_{P}}^{\bm{\alpha}}f_1,
\ldots,\nabla_{T_{P}}^{\bm{\alpha}}f_N\right].
$$
\end{definition}

% -----------------------------

\subsection{The Fundamental Problem}

Let $\bm{\alpha}=(\alpha_1,\dots,\alpha_n)$ and $\bm{\beta}=(\beta_1,\dots,\beta_n)$
be such that $\alpha_i,\beta_i \in(0,1)$, and $P^{j}=\left(P^{j}_{t_1},\dots,P^{j}_{t_n}\right)$,
where $P_{t_i}^j=\langle a_i,t_i,b_i,p_i^j,q_i^j\rangle$, $i=1,\dots,n$, $j=1,2$.
Consider the problem of finding an extremizer
$u:\Delta_n\rightarrow\R^N$ of the functional
\begin{equation}
\label{eq:FundFunct}
\mathcal{J}[u]=\int\limits_{\Delta_n}
F\left(t,u(t),\nabla_{B_{P^1}}^{\bm{\alpha}}u(t),
\nabla_{K_{P^2}}^{\bm{\beta}}u(t)\right)dt_n\dots dt_1
\end{equation}
subject to the boundary condition
\begin{equation}
\label{eq:FundBound}
\left.u(t)\right|_{\partial \Delta_n}\equiv\psi(t),
\end{equation}
where $\psi:\partial\Delta_n\rightarrow \mathbb{R}^N$ is a given function.
For simplicity of notation we write
\begin{equation*}
\left\{u\right\}_{P^1, P^2}^{\bm{\alpha},\bm{\beta}}(t)
=\left(t,u(t),\nabla_{B_{P^1}}^{\alpha}u(t),\nabla_{K_{P^2}}^{\beta}u(t)\right)
\end{equation*}
and
$dt=dt_n\dots dt_1$. As usually, we denote by $\partial_i F$, $i=1,\ldots,M$ ($M\in\mathbb{N}$),
the partial derivative of function $F:\R^M\rightarrow \R$ with
respect to its $i$th argument. We assume that
$F\in C^2\left(\Delta_n\times\mathbb{R}^N\times\mathbb{R}^{2nN};\mathbb{R}\right)$,
$t \mapsto \partial_{N+kn+i} F\left\{u\right\}_{P^1, P^2}^{\bm{\alpha},\bm{\beta}}(t)$
has continuously differentiable partial integral $K_{P_{t_i}^{1*}}^{1-\alpha_i}$
and continuous partial derivative $A_{P_{t_i}^{1*}}^{\alpha_i}$; and
$t \mapsto \partial_{n+N(k+n)+i} F\left\{u\right\}_{P^1, P^2}^{\bm{\alpha},\bm{\beta}}(t)$
has continuous partial integral $K_{P_{t_i}^{2*}}^{\beta_i}$, where $i=1,\dots,n$ and $k=1\dots,N$.
Moreover, we suppose that $k_{\alpha_i}$ and $k_{\beta_i}$ are difference kernels such that
$k_{1-\alpha_i}$, $k_{\beta_i}\in L_1\left((0,b_i-a_i);\R\right)$, $i=1,\dots,n$.

\begin{definition}
A continuously differentiable function
$u\in C^1\left(\bar{\Delta}_n;\mathbb{R}^N\right)$ is said to be
admissible for the variational problem \eqref{eq:FundFunct}--\eqref{eq:FundBound}
if, for all $i\in\{1,\dots,n\}$, $B_{P_{t_i}^{1}}^{\alpha_i}u$
and $K_{P_{t_i}^{2}}^{\beta_i}u$ exist and are continuous on $\bar{\Delta}_n$
and $u$ satisfies the boundary condition \eqref{eq:FundBound}.
\end{definition}

\begin{theorem}
\label{thm:ELCaputo}
Let $u$ be a solution to problem \eqref{eq:FundFunct}--\eqref{eq:FundBound}.
Then, $u$ satisfies the following system of fractional partial differential equations
of Euler--Lagrange type:
\begin{multline}
\label{eq:eqELCaputo}
\sum\limits_{i=1}^{n} \left[-A_{P_{t_i}^{1*}}^{\alpha_i}\partial_{N+kn+i}
F\left\{u\right\}_{P^1, P^2}^{\bm{\alpha},\bm{\beta}}(t)
+K_{P_{t_i}^{2*}}^{\beta_i}\partial_{n+N(k+n)+i}
F\left\{u\right\}_{P^1, P^2}^{\bm{\alpha},\bm{\beta}}(t)\right]\\
+\partial_{n+k} F\left\{u\right\}_{P^1, P^2}^{\bm{\alpha},\bm{\beta}}(t)=0,
\end{multline}
$k=1,\dots,N$, for all $t\in\Delta_n$.
\end{theorem}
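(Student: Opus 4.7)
The plan is to follow the classical first-variation argument, adapted to the generalized fractional setting, and to let Theorems~\ref{theorem:IPRI} and \ref{theorem:IBPD} do the heavy lifting of transferring the fractional operators off the test function. First I would fix an arbitrary admissible variation $u+\varepsilon\eta$, where $\eta\in C^1(\bar{\Delta}_n;\R^N)$ satisfies $\eta|_{\partial\Delta_n}\equiv 0$ so that $u+\varepsilon\eta$ still meets the boundary condition \eqref{eq:FundBound}. Using linearity of the generalized partial Caputo derivative and of the generalized partial integral (which follows directly from Definitions~\ref{def:GPC} and \ref{def:GPI}), one has $\nabla_{B_{P^1}}^{\bm{\alpha}}(u+\varepsilon\eta)=\nabla_{B_{P^1}}^{\bm{\alpha}}u+\varepsilon\nabla_{B_{P^1}}^{\bm{\alpha}}\eta$, and similarly for the $K$-gradient.

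Next I would form $j(\varepsilon):=\mathcal{J}[u+\varepsilon\eta]$ and impose the necessary condition $j'(0)=0$. Differentiating under the integral sign (legitimate because $F\in C^2$ and all the fractional gradients are continuous on the compact set $\bar{\Delta}_n$), the chain rule yields, for each component $k=1,\dots,N$, a contribution
\begin{equation*}
\int\limits_{\Delta_n}\Bigl[\partial_{n+k}F\,\eta_k
+\sum_{i=1}^{n}\partial_{N+kn+i}F\cdot B_{P_{t_i}^{1}}^{\alpha_i}\eta_k
+\sum_{i=1}^{n}\partial_{n+N(k+n)+i}F\cdot K_{P_{t_i}^{2}}^{\beta_i}\eta_k\Bigr]\,dt,
\end{equation*}
all partial derivatives of $F$ being evaluated at $\{u\}_{P^1,P^2}^{\bm{\alpha},\bm{\beta}}(t)$.

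Now comes the core step: I would transfer each generalized partial fractional operator from $\eta_k$ onto the partial derivative of the Lagrangian. For the $K$-terms, Theorem~\ref{theorem:IPRI} applies directly (the continuity hypotheses on $f=\partial_{n+N(k+n)+i}F(\cdots)$ and $\eta=\eta_k$ are satisfied by assumption), producing $K_{P_{t_i}^{2*}}^{\beta_i}\partial_{n+N(k+n)+i}F$ against $\eta_k$. For the $B$-terms, Theorem~\ref{theorem:IBPD} gives a boundary integral plus a bulk integral involving $A_{P_{t_i}^{1*}}^{\alpha_i}\partial_{N+kn+i}F$; the regularity conditions stated before the theorem guarantee that the hypotheses of Theorem~\ref{theorem:IBPD} hold. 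Since $\eta\equiv 0$ on $\partial\Delta_n$, the boundary integral vanishes and one obtains $-\int_{\Delta_n}\eta_k\,A_{P_{t_i}^{1*}}^{\alpha_i}\partial_{N+kn+i}F\,dt$.

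Collecting everything, $j'(0)=0$ reads
\begin{equation*}
\int\limits_{\Delta_n}\sum_{k=1}^{N}\eta_k\Bigl[\partial_{n+k}F
+\sum_{i=1}^{n}\bigl(-A_{P_{t_i}^{1*}}^{\alpha_i}\partial_{N+kn+i}F
+K_{P_{t_i}^{2*}}^{\beta_i}\partial_{n+N(k+n)+i}F\bigr)\Bigr]\,dt=0,
\end{equation*}
which must hold for every admissible $\eta$ vanishing on $\partial\Delta_n$. Choosing $\eta$ with only one non-zero component $\eta_k$ that is itself arbitrary in $C^1_0(\Delta_n;\R)$, the fundamental lemma of the calculus of variations (a standard multidimensional version, applicable by continuity of the bracketed expression) forces the bracket to vanish pointwise on $\Delta_n$ for each $k$, which is exactly \eqref{eq:eqELCaputo}. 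The only delicate point I anticipate is verifying that the regularity hypotheses imposed on $F$ ensure the applicability of Theorems~\ref{theorem:IPRI} and \ref{theorem:IBPD} to the correct slots of $\partial_j F(\{u\}_{P^1,P^2}^{\bm{\alpha},\bm{\beta}}(\cdot))$; everything else is a bookkeeping exercise in indices.
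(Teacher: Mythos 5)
Your proposal is correct and follows essentially the same route as the paper's own proof: form the variation $u+\varepsilon\eta$ with $\eta$ vanishing on $\partial\Delta_n$, set the derivative of $\varepsilon\mapsto\mathcal{J}[u+\varepsilon\eta]$ at $\varepsilon=0$ to zero, move the operators off $\eta_k$ via Theorems~\ref{theorem:IPRI} and \ref{theorem:IBPD} (the boundary term vanishing), and conclude with the fundamental lemma of the calculus of variations. The extra remarks you make on linearity, differentiation under the integral sign, and componentwise choice of $\eta$ are sound refinements of the same argument.
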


\begin{proof}
Suppose that $u$ is an extremizer of $\mathcal{J}$.
For $\eta\in C^1\left(\bar{\Delta}_n;\mathbb{R}^N\right)$
such that $B_{P_{t_i}^{1}}^{\alpha_i}\eta$ and $K_{P_{t_i}^{2}}^{\beta_i}\eta$
are continuous  for all $i\in\{1,\dots,n\}$, and
$\left.\eta(t)\right|_{\partial \Delta_n}\equiv0$, $\varepsilon\in\mathbb{R}$,
the function $\hat{u}(t)=u(t)+\varepsilon\eta(t)$ is still admissible. Define
$$
J(\varepsilon):=\mathcal{J}[\hat{u}]
=\int\limits_{\Delta_n} F\left(t,\hat{u}(t),\nabla_{B_{P^1}}^{\bm{\alpha}}
\hat{u}(t),\nabla_{K_{P^2}}^{\bm{\beta}}\hat{u}(t)\right)dt.
$$
Then, a necessary condition for $u$ to be an extremizer of $J$ is given by
$\left.\frac{d J}{d\varepsilon}\right|_{\varepsilon=0}=0$, that is,
\begin{multline*}
\int\limits_{\Delta_n} \sum\limits_{k=1}^{N}\Biggl(\partial_{n+k}
F\left\{u\right\}_{P^1, P^2}^{\bm{\alpha},\bm{\beta}}(t)\cdot \eta_k(t)
+\sum\limits_{i=1}^{n} \left[\partial_{N+kn+i} F\left\{u\right\}_{P^1, P^2}^{\bm{\alpha},
\bm{\beta}}(t)B_{P_{t_i}^{1}}^{\alpha_i}\eta_k(t)\right.\\
\left.+\partial_{n+N(k+n)+i} F\left\{u\right\}_{P^1, P^2}^{\bm{\alpha},
\bm{\beta}}(t)K_{P_{t_i}^{2}}^{\beta_i}\eta_k(t)\right]\Biggr)dt = 0.
\end{multline*}
By integration by parts formulas (Theorems~\ref{theorem:IPRI} and \ref{theorem:IBPD})
and since $\left.\eta(t)\right|_{\partial \Delta_n}\equiv0$, one has
\begin{equation*}
\int\limits_{\Delta_n}\left(\partial_{N+kn+i} F\cdot B_{P_{t_i}^{1}}^{\alpha_i}
\eta_k \right)dt=-\int\limits_{\Delta_n}\eta_k
\cdot\left(A_{P_{t_i}^{1*}}^{\alpha_i}\partial_{N+kn+i} F\right) dt,
\end{equation*}
\begin{equation*}
\int\limits_{\Delta_n}\left(\partial_{n+N(k+n)+i} F\cdot K_{P_{t_i}^{2}}^{\beta_i}\eta_k\right) dt
=\int\limits_{\Delta_n}\eta_k\cdot\left(K_{P_{t_i}^{2*}}^{\beta_i}\partial_{n+N(k+n)+i} F \right)dt,
\end{equation*}
where $i=1,\dots,n$ and $k=1,\dots,N$. Therefore,
\begin{multline*}
\int\limits_{\Delta_n}\sum\limits_{k=1}^{N}\eta_k(t)\cdot\Biggl(\partial_{n+k}
F\left\{u\right\}_{P^1, P^2}^{\bm{\alpha},\bm{\beta}}(t)
+\sum\limits_{i=1}^{n} \left[-A_{P_{t_i}^{1*}}^{\alpha_i}\partial_{N+kn+i}
F\left\{u\right\}_{P^1, P^2}^{\bm{\alpha},\bm{\beta}}(t)\right.\\
\left.+K_{P_{t_i}^{2*}}^{\beta_i}\partial_{n+N(k+n)+i}
F\left\{u\right\}_{P^1, P^2}^{\bm{\alpha},\bm{\beta}}(t)\right]\Biggr)dt = 0.
\end{multline*}
Finally, by the fundamental lemma of the calculus of variations,
we arrive to \eqref{eq:eqELCaputo}.
\end{proof}

\begin{definition}
We say that an admissible function $u$ is an extremal for problem
\eqref{eq:FundFunct}--\eqref{eq:FundBound} if it satisfies the
system of fractional partial differential equations \eqref{eq:eqELCaputo}.
\end{definition}

Using similar techniques as in the proof of Theorem~\ref{thm:ELCaputo},
one can prove the following theorem.

\begin{theorem}
\label{thm:EL:classical:Caputo}
Let $u:\Delta_n\rightarrow\R^N$ be an extremizer of
\begin{equation*}
\mathcal{J}[u]=\int\limits_{\Delta_n} F\left(t,u(t),
\nabla_{B_{P^1}}^{\bm{\alpha}}u(t),\nabla u(t)\right)dt_n\dots dt_1
\end{equation*}
subject to the boundary condition $\left.u(t)\right|_{\partial \Delta_n}\equiv\psi(t)$,
where $\psi:\partial\Delta_n\rightarrow \mathbb{R}^N$ is a given function.
Then, $u$ satisfies the system of multidimensional generalized Euler--Lagrange equations
\begin{multline*}
\sum\limits_{i=1}^{n} \left[A_{P_{t_i}^{1*}}^{\alpha_i}\partial_{N+kn+i}
F\left\{u\right\}_{P^1, P^2}^{\bm{\alpha},\bm{\beta}}(t)
+\frac{\partial}{\partial_{t_i}}\partial_{n+N(k+n)+i}
F\left\{u\right\}_{P^1, P^2}^{\bm{\alpha},\bm{\beta}}(t)\right]\\
=\partial_{n+k} F\left\{u\right\}_{P^1, P^2}^{\bm{\alpha},\bm{\beta}}(t),
\end{multline*}
$k=1,\dots,N$, for all $t\in\Delta_n$.
\end{theorem}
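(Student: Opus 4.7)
The plan is to mimic the proof of Theorem~\ref{thm:ELCaputo} step by step; the only structural change is that the last argument slot of $F$ now carries the classical gradient $\nabla u$ in place of the generalized fractional gradient $\nabla_{K_{P^2}}^{\bm{\beta}}u$, so the generalized fractional integration by parts applied there must be replaced by the classical integration by parts in the $t_i$-direction.

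First, I would introduce an admissible variation $\hat{u}(t)=u(t)+\varepsilon\eta(t)$ with $\eta\in C^1(\bar{\Delta}_n;\mathbb{R}^N)$, $\left.\eta\right|_{\partial\Delta_n}\equiv 0$, and $B_{P_{t_i}^{1}}^{\alpha_i}\eta$ continuous on $\bar{\Delta}_n$ for each $i\in\{1,\dots,n\}$; by linearity of both $B_{P_{t_i}^{1}}^{\alpha_i}$ and $\nabla$, these conditions are preserved under addition to $u$. Setting $J(\varepsilon):=\mathcal{J}[\hat{u}]$ and differentiating under the integral sign (justified by $F\in C^2$ and the admissibility assumptions on $u$), the necessary condition $J'(0)=0$ produces an integrand over $\Delta_n$ consisting of $\eta_k$, $B_{P_{t_i}^{1}}^{\alpha_i}\eta_k$ and $\partial\eta_k/\partial t_i$ paired against $\partial_{n+k}F$, $\partial_{N+kn+i}F$ and $\partial_{n+N(k+n)+i}F$ respectively, each $F$-derivative evaluated along the extremal.

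Next I would integrate by parts separately on the two derivative-type terms. For the generalized Caputo term, Theorem~\ref{theorem:IBPD} applies verbatim, and since $\eta_k$ vanishes on $\partial\Delta_n$ the boundary contribution drops, leaving $-\int_{\Delta_n}\eta_k\cdot A_{P_{t_i}^{1*}}^{\alpha_i}\partial_{N+kn+i}F\,dt$. For the classical derivative term I would invoke the standard divergence theorem on the product domain $\Delta_n$ (equivalently, Fubini together with one-dimensional integration by parts in $t_i$); again the boundary integral vanishes because $\left.\eta_k\right|_{\partial\Delta_n}\equiv 0$, producing $-\int_{\Delta_n}\eta_k\cdot\frac{\partial}{\partial t_i}\partial_{n+N(k+n)+i}F\,dt$. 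Collecting contributions, the first variation takes the form $\int_{\Delta_n}\sum_{k=1}^{N}\eta_k\cdot E_k\,dt=0$, where each $E_k$ is the difference between the two sides of the claimed Euler--Lagrange identity for index $k$; the fundamental lemma of the calculus of variations, applied to each arbitrary component $\eta_k$ on $\Delta_n$, then forces $E_k\equiv 0$, which is the stated system after rearranging signs.

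The only non-routine point is verifying that the ambient smoothness hypotheses really do license both integration by parts operations: Theorem~\ref{theorem:IBPD} needs $K_{P_{t_i}^{1*}}^{1-\alpha_i}\partial_{N+kn+i}F\in C^1(\bar{\Delta}_n;\mathbb{R})$, which is precisely what is imposed in the paragraph introducing the functional, while the classical integration by parts requires only $\partial_{n+N(k+n)+i}F$ to be $C^1$ in $t_i$, a consequence of $F\in C^2$ together with $u\in C^1$. Hence no new analytical obstacle arises beyond those already dispatched in Theorem~\ref{thm:ELCaputo}.
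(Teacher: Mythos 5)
Your proposal is correct and follows exactly the route the paper intends: the paper itself only remarks that Theorem~\ref{thm:EL:classical:Caputo} is proved ``using similar techniques as in the proof of Theorem~\ref{thm:ELCaputo}'', and your argument is precisely that adaptation, replacing the generalized fractional integration by parts on the last slot by classical integration by parts in $t_i$ and checking that the boundary terms vanish. The sign bookkeeping and the verification of the smoothness hypotheses needed for Theorem~\ref{theorem:IBPD} are both handled correctly.
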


\begin{example}
\label{ex:1}
Consider a medium motion whose displacement may be described by a scalar function
$u(t,x)$, where $x=(x_1,x_2)$. For example, this function may represent
the transverse displacement of a membrane. Suppose that the kinetic energy
$T$ and the potential energy $V$ of the medium are given by
$T\left(\frac{\partial u}{\partial t}\right)=\frac{1}{2}\int \rho
\left(\frac{\partial u}{\partial t}\right)^2 dx$ and $V(u)=\frac{1}{2}
\int k |\nabla u |^2dx$, respectively, where $\rho (x)$ is the mass density
and $k(x)$ is the stiffness, both assumed positive. Then, the classical action functional is
$\mathcal{J}(u)=\frac{1}{2}\int \int\left(\rho
\left(\frac{\partial u}{\partial t}\right)^2-k |\nabla u|^2 \right)dx dt$.
We shall illustrate what are the Euler--Lagrange equations when the Lagrangian
density depends on generalized fractional derivatives.
When we have the Lagrangian with the kinetic term depending
on the operator $B_{P_{t}}^{\alpha}$,
then the fractional action functional has the form
\begin{equation}
\label{ex:3}
\mathcal{J}(u)=\frac{1}{2} \int_{\Delta_3}\left[\rho \left(
B_{P_{t}}^{\alpha}u\right)^2-k |\nabla u|^2 \right]dx dt.
\end{equation}
The fractional Euler--Lagrange equation satisfied by an extremizer function of \eqref{ex:3} is
\begin{equation*}
\rho A_{P_{t}^{*}}^{\alpha} B_{P_{t}}^{\alpha}u-\nabla \cdot(k \nabla u)=0.
\end{equation*}
If $\rho$ and $k$ are constants, then the equation $\rho A_{P_{t}^{*}}^{\alpha}
B_{P_{t}}^{\alpha}u-c^2\Delta u=0$, $c^2=k/\rho$, can be called the
\emph{generalized time-fractional wave equation}. Now assume that the kinetic
and the potential energy depend on operators $B_{P_{t}}^{\alpha}$ and $B_{P}^{\beta}$,
$P=(P_{x_1},P_{x_2})$, $\beta=(\beta_1,\beta_2)$, respectively.
Then the action functional for the system has the form
\begin{equation}
\label{ex:2}
\mathcal{J}(u)=\frac{1}{2} \int_{\Delta_3}\left[\rho
\left(B_{P_{t}}^{\alpha}u\right)^2-k|\nabla_{B_{P}}^{\bm{\beta}} u|^2 \right]dx dt.
\end{equation}
The fractional Euler--Lagrange equation satisfied by an extremizer of \eqref{ex:2} is
\begin{equation*}
\rho A_{P_{t}^{*}}^{\alpha} B_{P_{t}}^{\alpha}u
-\sum\limits_{i=1}^{2} A_{P_{x_i}^{*}}^{\beta_i}(k B_{P_{x_i}}^{\beta_i}u)=0.
\end{equation*}
If $\rho$ and $k$ are constants, then $A_{P_{t}^{*}}^{\alpha} B_{P_{t}}^{\alpha}u
-c^2\left( \sum\limits_{i=1}^{2} A_{P_{x_i}^{*}}^{\beta_i} B_{P_{x_i}}^{\beta_i}u\right)=0$
can be called the \emph{generalized space- and time-fractional wave equation}.
\end{example}

% -------------------------------

\subsection{Dirichlet's Principle}
\label{sec:DP}

One of the most important variational principles for a PDE is Dirichlet's principle
for the Laplace equation. We shall present its generalized fractional counterpart.
In this section we assume that $N=1$. We show that the solution
of the generalized fractional boundary value problem
\begin{numcases}{ }
\sum_{i=1}^n A_{P_{t_i}^{*}}^{\alpha_i}\left(B_{P_{t_i}}^{\alpha_i}u\right)=0
& \text{ in } $\Delta_n$, \label{eq:BVP}\\
u=\psi & \text{ on } $\partial{\Delta_n}$, \label{b:BVP}
\end{numcases}
can be characterized as a minimizer of the energy functional
\begin{equation}
\label{eq:DirFunct}
\mathcal{J}[u]=\int\limits_{\Delta_n}
\left|\nabla_{B_{P}}^{\bm{\alpha}}u\right|^2 dt
\end{equation}
on the set
\begin{equation*}
\mathcal{A}=\left\{u\in C^1(\bar{\Delta}_n;\R):B_{P_{t_i}}^{\alpha_i}
u\in C^1(\bar{\Delta}_n;\R), \left.u\right|_{\partial \Delta_n}=\psi\right\},
\end{equation*}
where $\bm{\alpha}\in (0,1)^n$, $P=\left(P_{t_1},\dots, P_{t_n}\right)$,
$P^*=\left(P_{t_1}^*,\dots, P_{t_n}^*\right)$, and  $k_{1-\alpha_i}$
is a difference kernel such that $k_{1-\alpha_i}\in L_1\left((0,b_i-a_i);\R\right)$,
$i=1,\dots,n$.

\begin{remark}
In the following we assume that both problems, \eqref{eq:BVP}--\eqref{b:BVP}
and minimization of \eqref{eq:DirFunct} on the set $\mathcal{A}$, have solutions.
\end{remark}

\begin{theorem}[Generalized fractional Dirichlet's principle]
\label{thm:GFDP}
Let $\bm{\alpha}\in (0,1)^n$ and $u\in\mathcal{A}$.
Then $u$ solves the boundary value problem
\eqref{eq:BVP}--\eqref{b:BVP} if and only if $u$ satisfies
\begin{equation}
\label{eq:3}
\mathcal{J}[u]=\min\limits_{w\in \mathcal{A}}\mathcal{J}[w].
\end{equation}
\end{theorem}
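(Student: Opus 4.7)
The plan is to mimic the classical Dirichlet principle argument, using the generalized fractional integration by parts formula (Theorem~\ref{theorem:IBPD}) to convert differences of the functional into nonnegative quantities, and to convert a vanishing first variation into the Euler--Lagrange equation.

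For the direction ($\Rightarrow$): assume $u\in\mathcal{A}$ solves \eqref{eq:BVP}--\eqref{b:BVP}. Take an arbitrary competitor $w\in\mathcal{A}$ and set $v:=w-u$. Since $u$ and $w$ agree with $\psi$ on $\partial\Delta_n$, we have $v|_{\partial\Delta_n}\equiv 0$, and by linearity of $B_{P_{t_i}}^{\alpha_i}$ the identity
\begin{equation*}
\mathcal{J}[w]=\mathcal{J}[u]+2\sum_{i=1}^n\int\limits_{\Delta_n}
B_{P_{t_i}}^{\alpha_i}u\cdot B_{P_{t_i}}^{\alpha_i}v\,dt
+\int\limits_{\Delta_n}\bigl|\nabla_{B_P}^{\bm{\alpha}}v\bigr|^2\,dt
\end{equation*}
holds. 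I would then apply Theorem~\ref{theorem:IBPD} to each cross term with $f=B_{P_{t_i}}^{\alpha_i}u$ and $\eta=v$; the boundary integral vanishes because $v\equiv 0$ on $\partial\Delta_n$, giving
\begin{equation*}
\sum_{i=1}^n\int\limits_{\Delta_n}B_{P_{t_i}}^{\alpha_i}u\cdot B_{P_{t_i}}^{\alpha_i}v\,dt
=-\int\limits_{\Delta_n}v\cdot\sum_{i=1}^n A_{P_{t_i}^*}^{\alpha_i}\bigl(B_{P_{t_i}}^{\alpha_i}u\bigr)\,dt,
\end{equation*}
which is zero by \eqref{eq:BVP}. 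Hence $\mathcal{J}[w]=\mathcal{J}[u]+\int_{\Delta_n}|\nabla_{B_P}^{\bm{\alpha}}v|^2\,dt\geq\mathcal{J}[u]$, proving \eqref{eq:3}.

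For the direction ($\Leftarrow$): assume $u$ minimizes $\mathcal{J}$ on $\mathcal{A}$. For any $\eta\in C^1(\bar{\Delta}_n;\R)$ with $\eta|_{\partial\Delta_n}\equiv 0$ and such that $B_{P_{t_i}}^{\alpha_i}\eta\in C(\bar{\Delta}_n;\R)$, the perturbation $u+\varepsilon\eta$ lies in $\mathcal{A}$, so $j(\varepsilon):=\mathcal{J}[u+\varepsilon\eta]$ has a minimum at $\varepsilon=0$. Differentiating under the integral sign and using $j'(0)=0$ yields
\begin{equation*}
\sum_{i=1}^n\int\limits_{\Delta_n}B_{P_{t_i}}^{\alpha_i}u\cdot B_{P_{t_i}}^{\alpha_i}\eta\,dt=0.
\end{equation*}
Applying Theorem~\ref{theorem:IBPD} as above (the boundary term again vanishing) gives $\int_{\Delta_n}\eta\cdot\sum_{i=1}^n A_{P_{t_i}^*}^{\alpha_i}(B_{P_{t_i}}^{\alpha_i}u)\,dt=0$ for every admissible $\eta$. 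The fundamental lemma of the calculus of variations then forces $\sum_{i=1}^n A_{P_{t_i}^*}^{\alpha_i}(B_{P_{t_i}}^{\alpha_i}u)=0$ in $\Delta_n$, and $u=\psi$ on $\partial\Delta_n$ holds by admissibility.

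The main obstacle is really only bookkeeping: I must ensure the hypotheses of Theorem~\ref{theorem:IBPD} are met when it is applied with $f=B_{P_{t_i}}^{\alpha_i}u$, in particular that $K_{P_{t_i}^*}^{1-\alpha_i}(B_{P_{t_i}}^{\alpha_i}u)\in C^1(\bar{\Delta}_n;\R)$. This is where the standing regularity assumptions on $u\in\mathcal{A}$ and on the kernels $k_{1-\alpha_i}$ come in, and once noted the argument reduces to an elementary expansion-of-squares computation combined with the integration-by-parts identity.
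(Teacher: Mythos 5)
Your proof is correct and follows essentially the same route as the paper: the forward direction is the same expansion-of-squares argument combined with Theorem~\ref{theorem:IBPD} (the paper merely starts from the PDE tested against $v$ rather than from the expansion of $\mathcal{J}[w]$), and your reverse direction simply inlines the first-variation computation that the paper delegates to Theorem~\ref{thm:ELCaputo}. The regularity caveat you flag, namely that $K_{P_{t_i}^*}^{1-\alpha_i}\bigl(B_{P_{t_i}}^{\alpha_i}u\bigr)$ must be $C^1$ for Theorem~\ref{theorem:IBPD} to apply, is likewise implicit in the paper's own argument, so nothing essential is missing.
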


\begin{proof}
Multiply the equation \eqref{eq:BVP} by any $v\in C^1(\bar{\Delta}_n;\R)$
such that $\left.v\right|_{\partial_{\Delta_n}}= 0$ and $B_{P_{t_i}}^{\alpha_i}v$
is continuously differentiable on the rectangle $\bar{\Delta}_n$. Then,
after integration,
\begin{equation*}
\int\limits_{\Delta_n}v\sum_{i=1}^n
A_{P_{t_i}^{*}}^{\alpha_i}\left(B_{P_{t_i}}^{\alpha_i}u\right)dt = 0.
\end{equation*}
The generalized integration by parts formula in Theorem~\ref{theorem:IBPD} yields
\begin{equation}\label{eq:1}
\int\limits_{\Delta_n}\nabla_{B_{P}}^{\bm{\alpha}}u
\cdot\nabla_{B_{P}}^{\bm{\alpha}}v dt = 0,
\end{equation}
as there is no boundary term since $\left.v\right|_{\partial_{\Delta_n}}= 0$.
By \eqref{eq:1} and properties of the scalar product, one has
\begin{equation*}
\begin{split}
\int\limits_{\Delta_n} \left|\nabla_{B_{P}}^{\bm{\alpha}}(u+v)\right|^2 dt
&=\int\limits_{\Delta_n} \left|\nabla_{B_{P}}^{\bm{\alpha}}u\right|^2 dt
+2\int\limits_{\Delta_n}\nabla_{B_{P}}^{\bm{\alpha}}u
\cdot \nabla_{B_{P}}^{\bm{\alpha}}v~dt
+\int\limits_{\Delta_n} \left|\nabla_{B_{P}}^{\bm{\alpha}}v\right|^2 dt\\
&\geq \int\limits_{\Delta_n} \left|\nabla_{B_{P}}^{\bm{\alpha}}u\right|^2 dt.
\end{split}
\end{equation*}
Conversely, if $u$ satisfies \eqref{eq:3}, then, by Theorem~\ref{thm:ELCaputo},
$u$ is a solution to \eqref{eq:BVP}--\eqref{b:BVP}.
\end{proof}

\begin{theorem}
\label{thm:unc:sol}
There exists at most one solution $u\in\mathcal{A}$
to problem \eqref{eq:BVP}--\eqref{b:BVP}.
\end{theorem}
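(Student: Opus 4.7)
The plan is to follow the classical linearity-plus-energy template for uniqueness of solutions to linear boundary value problems, adapted to the generalized fractional setting. Suppose $u_1,u_2\in\mathcal{A}$ are two solutions of \eqref{eq:BVP}--\eqref{b:BVP} and set $w:=u_1-u_2$. By linearity of each $A_{P_{t_i}^{*}}^{\alpha_i}$ and $B_{P_{t_i}}^{\alpha_i}$, the difference $w$ belongs to $\mathcal{A}$ with homogeneous data: $\sum_{i=1}^n A_{P_{t_i}^{*}}^{\alpha_i}\bigl(B_{P_{t_i}}^{\alpha_i} w\bigr)=0$ in $\Delta_n$ and $w|_{\partial\Delta_n}=0$. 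The goal is to show $w\equiv 0$.

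The core of the proof is an energy identity. I would multiply the homogeneous equation by $w$, integrate over $\Delta_n$, and apply the generalized fractional integration-by-parts formula (Theorem~\ref{theorem:IBPD}) summand by summand. Every boundary contribution $\int_{\partial\Delta_n}w\cdot K_{P_{t_i}^{*}}^{1-\alpha_i}(B_{P_{t_i}}^{\alpha_i}w)\,\nu^i\,d(\partial\Delta_n)$ vanishes since $w|_{\partial\Delta_n}=0$, leaving
\begin{equation*}
\int_{\Delta_n}\bigl|\nabla_{B_{P}}^{\bm{\alpha}}w\bigr|^{2}\,dt=0.
\end{equation*}
An equivalent route, which I would mention as a corollary of the preceding theorem, is via Dirichlet's principle (Theorem~\ref{thm:GFDP}): both $u_1$ and $u_2$ attain the minimum $m$ of $\mathcal{J}$ on $\mathcal{A}$, the set $\mathcal{A}$ is convex, so $\tfrac{1}{2}(u_1+u_2)\in\mathcal{A}$ is admissible, and the parallelogram identity applied to $\nabla_{B_{P}}^{\bm{\alpha}}$ yields $\mathcal{J}[\tfrac{1}{2}(u_1+u_2)]=m-\tfrac{1}{4}\int_{\Delta_n}|\nabla_{B_{P}}^{\bm{\alpha}}w|^{2}\,dt$. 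Comparing with the minimality bound $\mathcal{J}[\tfrac{1}{2}(u_1+u_2)]\geq m$ forces the same vanishing integral.

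From the vanishing of the energy one reads off $B_{P_{t_i}}^{\alpha_i}w\equiv 0$ on $\Delta_n$ for every $i=1,\dots,n$. The main obstacle is upgrading this pointwise vanishing to $w\equiv 0$, because the generalized Caputo operator $B_{P_{t_i}}^{\alpha_i}=K_{P_{t_i}}^{1-\alpha_i}\circ\partial_{t_i}$ mixes a left and a right kernel term and is not automatically injective. I would handle this by invoking the standing assumption on the kernel $k_{1-\alpha_i}$, which together with the continuity of $\partial_{t_i}w$ forces $\partial_{t_i}w\equiv 0$ slice by slice (reducing by the Remark following Definition~\ref{def:GPC} to the one-dimensional injectivity of $K_{P_{t_i}}^{1-\alpha_i}$ on $C(\bar\Delta_n;\R)$, which in the standard Riemann--Liouville case is classical). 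Hence $w$ is constant in each variable, and the homogeneous boundary condition $w|_{\partial\Delta_n}=0$ pins that constant to zero, giving $u_1=u_2$ as required.
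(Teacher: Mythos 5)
Your proposal is correct and follows essentially the same route as the paper: form the difference $w$, derive the energy identity $\int_{\Delta_n}\left|\nabla_{B_{P}}^{\bm{\alpha}}w\right|^{2}dt=0$ via Theorem~\ref{theorem:IBPD} with vanishing boundary terms, conclude $B_{P_{t_i}}^{\alpha_i}w\equiv 0$, pass to $\partial_{t_i}w\equiv 0$, and use $w|_{\partial\Delta_n}=0$ to finish. If anything, you are more explicit than the paper at the one delicate step --- the passage from $K_{P_{t_i}}^{1-\alpha_i}\partial_{t_i}w\equiv 0$ to $\partial_{t_i}w\equiv 0$, which requires injectivity of the partial $K$-operator and which the paper simply asserts from $k_{1-\alpha_i}\in L_1$ --- and your alternative derivation of the energy identity via Dirichlet's principle and the parallelogram law is a valid bonus not present in the paper.
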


\begin{proof}
Let $u\in \mathcal{A}$ be a solution to problem \eqref{eq:BVP}--\eqref{b:BVP}.
Assume that $\hat{u}$ is another solution to problem
\eqref{eq:BVP}--\eqref{b:BVP}. Then $w=u-\hat{u}\neq 0$ and
\begin{equation*}
\int_{\Delta_n}w\sum_{i=1}^n
A_{P_{t_i}^{*}}^{\alpha_i}\left(B_{P_{t_i}}^{\alpha_i}w\right)dt = 0.
\end{equation*}
By the generalized integration by parts formula (Theorem~\ref{theorem:IBPD}),
and since $\left.w\right|_{\partial\Delta_n}=0$, one has
\begin{equation*}
\int_{\Delta_n}\sum_{i=1}^n \left(B_{P_{t_i}}^{\alpha_i}w\right)^2 dt
=\int_{\Delta_n}\left|\nabla_{B_{P}}^{\bm{\alpha}}w\right|^2 dt = 0.
\end{equation*}
Note that $\left|\nabla_{B_{P}}^{\bm{\alpha}}w\right|^2$
is a nonnegative definite quantity. The volume integral of
a nonnegative definite quantity is equal to zero only in the case
when this quantity is zero itself throughout the volume. Thus,
$\nabla_{B_{P}}^{\bm{\alpha}}w=0$. Since $w$ is continuously
differentiable and $k_{1-\alpha_i}\in L_1\left((0,b_i-a_i);\R\right)$, we have
\begin{equation*}
\frac{\partial}{\partial t_i}w (t)\equiv 0, \quad i=1,\dots,n,
\end{equation*}
that is, $\nabla w=0$. Because $w=0$ on $\partial\Delta_n$,
we deduce that $w=0$. In other words, $u=\hat{u}$.
\end{proof}

% --------------------------------------

\subsection{The Multidimensional Generalized Fractional Noether's Theorem}
\label{sec:NT}

Emmy Noether's theorem \cite{Noether} states that conservation laws in classical
mechanics follow whenever the Lagrangian function is invariant under
a one-parameter continuous group that transforms dependent and/or independent
variables \cite{torres:2002,torres:2004}.
In this section we prove a Noether-type theorem
for variational problems that depend on generalized partial fractional integrals
and derivatives. We start by introducing the notion of variational invariance.

\begin{definition}
\label{def:trans}
Functional \eqref{eq:FundFunct} is said to be invariant under an
$\varepsilon$-parameter family of infinitesimal transformations
\begin{equation}
\label{eq:trans}
\bar{u}(t)=u(t)+\varepsilon\xi(t,u(t))+o(\varepsilon)
\end{equation}
with $\xi\in C^1\left(\bar{\Delta}_n;\mathbb{R}^N\right)$ 
such that $B_{P_{t_i}^{1}}^{\alpha_i}\xi$
and $K_{P_{t_i}^{2}}^{\beta_i}\xi$ 
exist and are continuous on $\bar{\Delta}_n$,
$i\in\{1,\dots,n\}$, if
\begin{equation*}
\int\limits_{\Delta_n^*} F\left(t,u(t),\nabla_{B_{P^1}}^{\bm{\alpha}}u(t),
\nabla_{K_{P^2}}^{\bm{\beta}}u(t)\right)dt
=\int\limits_{\Delta_n^*} F\left(t,\bar{u}(t),
\nabla_{B_{P^1}}^{\bm{\alpha}}\bar{u}(t),
\nabla_{K_{P^2}}^{\bm{\beta}}\bar{u}(t)\right)dt
\end{equation*}
for any $\Delta_n^*\subseteq\Delta_n$.
\end{definition}

The following result provides a necessary condition of invariance.

\begin{lemma}
If functional \eqref{eq:FundFunct} is invariant under an $\varepsilon$-parameter
family of infinitesimal transformations \eqref{eq:trans}, then
\begin{multline}
\label{eq:NCI}
\sum\limits_{k=1}^{N}\Biggl(\partial_{n+k}
F\left\{u\right\}_{P^1, P^2}^{\bm{\alpha},\bm{\beta}}(t)\cdot \xi_k(t,u)
+\sum\limits_{i=1}^{n} \left[\partial_{N+kn+i}
F\left\{u\right\}_{P^1, P^2}^{\bm{\alpha},\bm{\beta}}(t)
B_{P_{t_i}^{1}}^{\alpha_i}\xi_k(t,u)\right.\\
\left.+\partial_{n+N(k+n)+i} F\left\{u\right\}_{P^1, P^2}^{\bm{\alpha},
\bm{\beta}}(t)K_{P_{t_i}^{2}}^{\beta_i}\xi_k(t,u)\right]\Biggr)=0.
\end{multline}
\end{lemma}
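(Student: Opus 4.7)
The plan is to differentiate the invariance identity in Definition~\ref{def:trans} with respect to $\varepsilon$ at $\varepsilon=0$ and then to exploit the arbitrariness of the subdomain $\Delta_n^*$. Since the left-hand side of that identity is independent of $\varepsilon$, we obtain
\begin{equation*}
\frac{d}{d\varepsilon}\bigg|_{\varepsilon=0}\int\limits_{\Delta_n^*} F\left(t,\bar{u}(t),\nabla_{B_{P^1}}^{\bm{\alpha}}\bar{u}(t),\nabla_{K_{P^2}}^{\bm{\beta}}\bar{u}(t)\right)dt = 0
\end{equation*}
for every $\Delta_n^*\subseteq\Delta_n$.

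Next I would bring the $\varepsilon$-derivative inside the integral; this is justified by the $C^2$ regularity of $F$ together with the smooth dependence on $\varepsilon$ of $\bar{u}$ and of its fractional transforms. The crucial observation is that $B_{P^1_{t_i}}^{\alpha_i}$ and $K_{P^2_{t_i}}^{\beta_i}$ are linear (immediate from Definitions~\ref{def:GPC} and \ref{def:GPI}), so
\begin{equation*}
B_{P^1_{t_i}}^{\alpha_i}\bar{u}(t) = B_{P^1_{t_i}}^{\alpha_i}u(t) + \varepsilon B_{P^1_{t_i}}^{\alpha_i}\xi(t,u(t)) + o(\varepsilon),
\end{equation*}
and analogously for $K_{P^2_{t_i}}^{\beta_i}$. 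Using also $\tfrac{\partial\bar{u}}{\partial\varepsilon}\big|_{\varepsilon=0} = \xi(t,u(t))$ and applying the chain rule to $F$ produces, as the integrand at $\varepsilon=0$, exactly the bracketed expression on the left-hand side of \eqref{eq:NCI}.

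Finally, I would pass from the integral identity to the pointwise identity \eqref{eq:NCI}. The integrand just produced is continuous in $t$ thanks to the standing continuity hypotheses on $F$, $u$, $B_{P^1_{t_i}}^{\alpha_i}u$, $K_{P^2_{t_i}}^{\beta_i}u$ and on $\xi$, $B_{P^1_{t_i}}^{\alpha_i}\xi$, $K_{P^2_{t_i}}^{\beta_i}\xi$. Since its integral vanishes over every $\Delta_n^*\subseteq\Delta_n$, a standard localization argument (shrinking $\Delta_n^*$ around an arbitrary interior point) forces it to be identically zero, yielding \eqref{eq:NCI}.

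The hard part is only in the bookkeeping: the partial derivatives of $F$ carry the unwieldy indices $\partial_{N+kn+i}$ and $\partial_{n+N(k+n)+i}$, and one has to line these up carefully with the terms generated by the chain rule. All of the genuine analytic content --- linearity of the generalized partial fractional operators, existence and continuity of these operators applied to $\xi$ --- is either immediate from the definitions or explicitly built into the hypothesis on the family \eqref{eq:trans} in Definition~\ref{def:trans}.
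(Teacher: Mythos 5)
Your proposal is correct and follows essentially the same route as the paper: the paper first uses the arbitrariness of $\Delta_n^*$ to reduce the invariance condition to the pointwise identity $F\left(t,u,\nabla_{B_{P^1}}^{\bm{\alpha}}u,\nabla_{K_{P^2}}^{\bm{\beta}}u\right)=F\left(t,\bar{u},\nabla_{B_{P^1}}^{\bm{\alpha}}\bar{u},\nabla_{K_{P^2}}^{\bm{\beta}}\bar{u}\right)$ and then differentiates at $\varepsilon=0$, whereas you differentiate under the integral sign first and localize afterwards --- the same two ingredients (chain rule plus linearity of the generalized partial fractional operators, and a continuity--localization step) in the opposite order. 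Your version is, if anything, more explicit about the justifications the paper leaves implicit.
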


\begin{proof}
By Definition~\ref{def:trans}, invariance of functional \eqref{eq:FundFunct}
under transformations \eqref{eq:trans} is equivalent to
\begin{equation}
\label{inv_eq}
F\left(t,u,\nabla_{B_{P^1}}^{\bm{\alpha}}u,
\nabla_{K_{P^2}}^{\bm{\beta}}u\right)
= F\left(t,\bar{u},\nabla_{B_{P^1}}^{\bm{\alpha}}\bar{u},
\nabla_{K_{P^2}}^{\bm{\beta}}\bar{u}\right).
\end{equation}
Let us differentiate \eqref{inv_eq} with respect to $\varepsilon$:
\begin{multline*}
\frac{d}{d\varepsilon}F\Biggl(t,u(t)+\varepsilon\xi(t,u(t))+o(\varepsilon),
\nabla_{B_{P^1}}^{\bm{\alpha}}\left(u(t)+\varepsilon\xi(t,u(t))
+o(\varepsilon)\right),\\
\nabla_{K_{P^2}}^{\bm{\beta}}\left(u(t)
+\varepsilon\xi(t,u(t))+o(\varepsilon)\right)\Biggr) = 0.
\end{multline*}
Putting $\varepsilon=0$ and applying definitions and properties
of partial generalized fractional operators, we obtain \eqref{eq:NCI}.
\end{proof}

In order to state Noether's theorem in a compact form, we follow \cite{gastao1}.
More precisely, we introduce two bilinear operators.

\begin{definition}
\label{def:1}
Let $f,g\in C^1(\bar{\Delta}_n;\R)$ for $K_{P_{t_i}^{1}}^{1-\alpha_i}
g\in C^1(\bar{\Delta}_n;\R)$. We define the following bilinear operators:
\begin{equation*}
\begin{split}
\textsl{D}^{\alpha_i}_{P_{t_i}^1}[f,g]
&:=f A_{P_{t_i}^{1*}}^{\alpha_i}g+gB_{P_{t_i}^{1}}^{\alpha_i}f,\\
\textsl{I}^{\beta_i}_{P_{t_i}^2}[f,g]
&:=-f K_{P_{t_i}^{2*}}^{\beta_i}g+gK_{P_{t_i}^{2}}^{\beta_i}f,
\end{split}
\end{equation*}
$i=1\dots,n$.
\end{definition}

Now we are ready to state our generalized fractional Noether's theorem.

\begin{theorem}[Multidimensional generalized fractional Noether's theorem]
\label{noether}
If functional \eqref{eq:FundFunct} is invariant,
in the sense of Definition~\ref{def:trans}, then
\begin{multline}
\label{eq:Noether}
\sum\limits_{k=1}^N\sum\limits_{i=1}^n\Biggl[\textsl{D}^{\alpha_i}_{P_{t_i}^1}
[\xi_k(t,u(t)),\partial_{N+kn+i} F\left\{u\right\}_{P^1, P^2}^{\bm{\alpha},\bm{\beta}}(t)]\\
+\textsl{I}^{\beta_i}_{P_{t_i}^2}[\xi_k(t,u(t)),\partial_{n+N(k+n)+i}
F\left\{u\right\}_{P^1, P^2}^{\bm{\alpha},\bm{\beta}}(t)]\Biggr]=0
\end{multline}
along any extremal of \eqref{eq:FundFunct}.
\end{theorem}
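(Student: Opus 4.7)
The plan is to combine the necessary condition of invariance \eqref{eq:NCI}, which is valid pointwise under the hypothesis, with the Euler--Lagrange system \eqref{eq:eqELCaputo} from Theorem~\ref{thm:ELCaputo}, which holds along any extremal, and then to repackage the resulting identity into the bilinear operators of Definition~\ref{def:1}.

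First, along an extremal $u$ I would solve \eqref{eq:eqELCaputo} for $\partial_{n+k}F$, obtaining
$$
\partial_{n+k} F\left\{u\right\}_{P^1, P^2}^{\bm{\alpha},\bm{\beta}}
= \sum_{i=1}^n \Bigl[A_{P_{t_i}^{1*}}^{\alpha_i}\partial_{N+kn+i} F\left\{u\right\}_{P^1, P^2}^{\bm{\alpha},\bm{\beta}}
- K_{P_{t_i}^{2*}}^{\beta_i}\partial_{n+N(k+n)+i} F\left\{u\right\}_{P^1, P^2}^{\bm{\alpha},\bm{\beta}}\Bigr]
$$
for each $k=1,\dots,N$. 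Substituting this expression into \eqref{eq:NCI} (writing $\xi_k$ for $\xi_k(t,u(t))$) eliminates the stand-alone term $\partial_{n+k}F\cdot\xi_k$, leaving inside the double sum four summands per pair $(k,i)$.

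Next, I would regroup those four summands index by index. The two of order $\alpha_i$,
$$
\xi_k\cdot A_{P_{t_i}^{1*}}^{\alpha_i}\partial_{N+kn+i} F
+\partial_{N+kn+i} F\cdot B_{P_{t_i}^{1}}^{\alpha_i}\xi_k,
$$
collapse, by Definition~\ref{def:1}, to $\textsl{D}^{\alpha_i}_{P_{t_i}^1}\bigl[\xi_k,\partial_{N+kn+i} F\bigr]$, while the two of order $\beta_i$,
$$
-\xi_k\cdot K_{P_{t_i}^{2*}}^{\beta_i}\partial_{n+N(k+n)+i} F
+\partial_{n+N(k+n)+i} F\cdot K_{P_{t_i}^{2}}^{\beta_i}\xi_k,
$$
collapse to $\textsl{I}^{\beta_i}_{P_{t_i}^2}\bigl[\xi_k,\partial_{n+N(k+n)+i} F\bigr]$. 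Summing over $k=1,\dots,N$ and $i=1,\dots,n$ then produces exactly \eqref{eq:Noether}.

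I do not anticipate a serious obstacle: once \eqref{eq:NCI} and \eqref{eq:eqELCaputo} are in hand, the argument is pure algebraic bookkeeping. The points that deserve care are (i) matching the sign convention of Definition~\ref{def:1}---in particular, the minus sign built into $\textsl{I}^{\beta_i}_{P_{t_i}^2}$ must line up with the $+K_{P_{t_i}^{2*}}^{\beta_i}$ term of \eqref{eq:eqELCaputo}, which is precisely why the Euler--Lagrange equation is solved for $\partial_{n+k}F$ rather than plugged in directly---and (ii) checking that the regularity assumed on $\xi$ in Definition~\ref{def:trans}, together with the standing hypotheses on $F$ and on the kernels, guarantees that each generalized fractional operator appearing inside $\textsl{D}$ and $\textsl{I}$ is well defined on the function to which it is applied along the extremal.
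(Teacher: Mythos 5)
Your proposal is correct and follows essentially the same route as the paper's own proof: solve the Euler--Lagrange system \eqref{eq:eqELCaputo} for $\partial_{n+k}F$, substitute into the necessary condition of invariance \eqref{eq:NCI}, and regroup the resulting four terms per pair $(k,i)$ into the bilinear operators of Definition~\ref{def:1}. The sign bookkeeping you flag works out exactly as you describe, so no changes are needed.
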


\begin{proof}
By equations \eqref{eq:eqELCaputo} we have
\begin{multline}
\label{eq:5}
\partial_{n+k} F\left\{u\right\}_{P^1, P^2}^{\bm{\alpha},\bm{\beta}}(t)
= \sum_{i=1}^n \left[A_{P_{t_i}^{1*}}^{\alpha_i}\partial_{N+kn+i}
F\left\{u\right\}_{P^1, P^2}^{\bm{\alpha},\bm{\beta}}(t)\right.\\
\left.-K_{P_{t_i}^{2*}}^{\beta_i}\partial_{n+N(k+n)+i}
F\left\{u\right\}_{P^1, P^2}^{\bm{\alpha},\bm{\beta}}(t)\right],
\quad k=1,\dots,N.
\end{multline}
Putting \eqref{eq:5} into \eqref{eq:NCI}, we obtain that
\begin{multline*}
\sum\limits_{k=1}^N\sum\limits_{i=1}^n \Biggl[
\xi_k(t,u(t))A_{P_{t_i}^{1*}}^{\alpha_i}\partial_{N+kn+i}
F\left\{u\right\}_{P^1, P^2}^{\bm{\alpha},\bm{\beta}}(t)\\
-\xi_k(t,u(t))K_{P_{t_i}^{2*}}^{\beta_i}\partial_{n+N(k+n)+i}
F\left\{u\right\}_{P^1, P^2}^{\bm{\alpha},\bm{\beta}}(t)
+\partial_{N+i+kn}F\left\{u\right\}_{P^1, P^2}^{\bm{\alpha},\bm{\beta}}(t)
B_{P_{t_i}^{1}}^{\alpha_i}\xi_k(t,u(t))\\
+\partial_{n+N(k+n)+i}F\left\{u\right\}_{P^1, P^2}^{\bm{\alpha},\bm{\beta}}(t)
K_{P_{t_i}^{2}}^{\beta_i}\xi_k(t,u(t))\Biggr]=0.
\end{multline*}
Finally, we arrive to \eqref{eq:Noether} by Definition~\ref{def:1}.
\end{proof}

\begin{example}
Let $N=1$, $\bm{\alpha},\bm{\beta}\in (0,1)^n$, $c\in\mathbb{R}$
and $P=(P_{t_1},\dots,P_{t_n})$ with $P_{t_i}=\langle a_i,t_i,b_i,p_i,q_i\rangle$,
$i=1,\dots, n$. Consider the $\varepsilon$-parameter family of infinitesimal transformations
\begin{equation}
\label{eq:tranex}
\bar{u}(t)=u(t)+\varepsilon c+o(\varepsilon)
\end{equation}
and the functional
\begin{equation*}
\mathcal{J}[u]=\int\limits_{\Delta_n}F\left(t,\nabla_{B_{P}}^{\bm{\alpha}}u(t)\right)dt.
\end{equation*}
Then, for any $\Delta_n^{*}\subseteq\Delta_n$, we have
\begin{equation*}
\int\limits_{\Delta_n^{*}}F\left(t,\nabla_{B_{P}}^{\bm{\alpha}}\bar{u}(t)\right)dt
=\int\limits_{\Delta_n^{*}}F\left(t,\nabla_{B_{P}}^{\bm{\alpha}}u(t)\right)dt.
\end{equation*}
Hence, $\mathcal{J}[u]$ is invariant under transformations \eqref{eq:tranex}
and Theorem~\ref{noether} asserts that
\begin{equation*}
\sum\limits_{i=1}^{n}\textsl{D}^{\alpha_i}_{P_{t_i}^1}\left[c,
\partial_{n+i} F \left(t,\nabla_{B_{P}}^{\bm{\alpha}}u(t)\right)\right]=0.
\end{equation*}
\end{example}

% ------------------------

\section{Conclusion}
\label{sec:conc}

Partial fractional integrals and derivatives can be defined in different ways and, consequently,
in each case one must consider different variational problems. In this paper
we unify and extend previous results of the multidimensional calculus of variations
by considering more general operators that reduce to the standard
fractional integrals and derivatives by an appropriate choice of kernels
and $p$-sets. After proving generalized integration by parts formulas,
we obtained Euler--Lagrange equations, a generalized fractional Dirichlet's principle,
and a fractional Noether's theorem. As an example, we obtained
a generalized space- and time-fractional wave equation.

This paper marks the born of the generalized multidimensional fractional calculus of variations.
Much remains to be done. For example, if boundary conditions are not imposed at the initial problem,
then Theorem~\ref{thm:ELCaputo} needs to be complemented with transversality conditions. Problems
subject to constraints can also be considered.

% ------------------------

\section*{Acknowledgements}

This work was supported by FEDER funds through
COMPETE --- Operational Programme Factors of Competitiveness
(``Programa Operacional Factores de Competitividade'')
and by Portuguese funds through the
Center for Research and Development
in Mathematics and Applications (University of Aveiro)
and the Portuguese Foundation for Science and Technology
(``FCT --- Funda\c{c}\~{a}o para a Ci\^{e}ncia e a Tecnologia''),
within project PEst-C/MAT/UI4106/2011
with COMPETE number FCOMP-01-0124-FEDER-022690.
Odzijewicz was also supported by FCT through the
PhD fellowship SFRH/BD/33865/2009;
Malinowska by Bialystok University of Technology
grant S/WI/02/2011; Odzijewicz and Torres by EU funding
under the 7th Framework Programme FP7-PEOPLE-2010-ITN,
grant agreement number 264735-SADCO.

% ------------------------

% ------------------------

\end{document}